\documentclass{article}

\usepackage{epsfig}
\usepackage{amsfonts}
\usepackage{amssymb}
\usepackage{amstext}
\usepackage{amsmath}
\usepackage{algorithm}
\usepackage{verbatim}
\usepackage{amsthm}
\usepackage[normalem]{ulem}

%\usepackage{layout}% if you want to see the layout parameters
                     % and now use \layout command in the body

% This is the stuff for normal spacing
\makeatletter
 \setlength{\textwidth}{6.5in}
 \setlength{\oddsidemargin}{0in}
 \setlength{\evensidemargin}{0in}
 \setlength{\topmargin}{0.25in}
 \setlength{\textheight}{8.25in}
 \setlength{\headheight}{0pt}
 \setlength{\headsep}{0pt}
 \setlength{\marginparwidth}{59pt}

 \setlength{\parindent}{0pt}
 \setlength{\parskip}{5pt plus 1pt}
 \setlength{\abovedisplayskip}{8pt plus 3pt minus 6pt}

 \makeatother

%hack: proofof to STOC style
%\newenvironment{proofof}[1]{
%    \trivlist
%    \item[\hskip 1.1em\hskip \labelsep{\sc Proof of #1.}]%
%    \ignorespaces
%	}{\qed}
%\newenvironment{proofof}[1]{{\sc Proof of #1:  }}{\qed}
%\renewcommand{\thesection}{\lecnum.\arabic{section}}
%\renewcommand{\thesection}{\arabic{section}}

%\newcommand{\poly}{\operatorname{poly}}
%\newcommand{\polylog}{\operatorname{polylog}}

%\newcommand{\size}[1]{\ensuremath{\left|#1\right|}}
%\newcommand{\ceil}[1]{\ensuremath{\left\lceil#1\right\rceil}}
%\newcommand{\floor}[1]{\ensuremath{\left\lfloor#1\right\rfloor}}
%\newcommand{\norm}[2]{\ensuremath{\Vert {#1} \Vert_{#2}}}
% Symbols, Sets, and Values
% ---------------------------------------------

\numberwithin{equation}{section}
\numberwithin{table}{section}

\newtheorem{fact}{Fact}[section]
\newtheorem{lemma}[fact]{Lemma}
\newtheorem{theorem}[fact]{Theorem}

% math notation

\newcommand{\polylog}{\operatorname{polylog}}

% anupam's abbreviations

\newcommand{\junk}[1]{}

%\newcommand{\expct}[1]{\ensuremath{\text{{\bf E}$\left[#1\right]$}}}

%\newcommand{\rank}{\mbox{rank}}

%\renewcommand{\arraystretch}{1.2}
%\setcounter{tocdepth}{2}

%%%%%%%%%%%%%%%%%%%%%%%%%%%%%%%%%%%%%%%%%%%%%%%%%%%%%%%%%%%%%%%%%%%%%%%%%%%
% Document begins here %%%%%%%%%%%%%%%%%%%%%%%%%%%%%%%%%%%%%%%%%%%%%%%%%%%%
%%%%%%%%%%%%%%%%%%%%%%%%%%%%%%%%%%%%%%%%%%%%%%%%%%%%%%%%%%%%%%%%%%%%%%%%%%%

\newcommand{\vol}{{\mbox{vol}}}

\title{Finding Small Sparse Cuts Locally by Random Walk\footnote{This is independent from the work~\cite{gharan-trevisan} which obtained similar results.}}

\author{ 
Tsz Chiu Kwok, Lap Chi Lau \vspace*{2mm}\\
The Chinese University of Hong Kong}

\date{}

\begin{document}
\maketitle

\begin{abstract}
We study the problem of finding a small sparse cut in an undirected graph.
Given an undirected graph $G=(V,E)$ and a parameter $k \leq |E|$,
the small sparsest cut problem is to find a set $S \subseteq V$ with minimum conductance among all sets with volume at most $k$.
Using ideas developed in local graph partitioning algorithms, %~\cite{spielman-teng,chung}, 
we obtain the following bicriteria approximation algorithms for the small sparsest cut problem:
\begin{itemize}
\item If there is a set $U \subseteq V$ with conductance $\phi$ and $\vol(U) \leq k$, then there is a polynomial time algorithm to find a set $S$ with conductance $O(\sqrt{\phi/\epsilon})$ and $\vol(S) \leq k^{1+\epsilon}$ for any $\epsilon>1/k$.
\item If there is a set $U \subseteq V$ with conductance $\phi$ and $\vol(U) \leq k$, then there is a polynomial time algorithm to find a set $S$ with conductance $O(\sqrt{\phi \log k / \epsilon})$ and $\vol(S) \leq (1+\epsilon)k$ for any $\epsilon>2\ln k/k$.
\end{itemize}
These algorithms can be implemented locally using truncated random walk,
with running time almost linear to the output size.
%These provide local graph partitioning algorithms with tighter bounds on the output size or better conductance guarantee when $k$ is sublinear.
This provides a local graph partitioning algorithm with a better conductance guarantee when $k$ is sublinear.
%We also discuss connections to the heat kernel pagerank and the small set expansion conjecture.

\end{abstract}

%\newpage

\section{Introduction}

For an undirected graph $G=(V,E)$,
the conductance of a set $S \subseteq V$ is defined as $\phi(S) = |\delta(S)|/\vol(S)$, where $\delta(S)$ is the set of edges with one endpoint in $S$ and another endpoint in $V-S$, and $\vol(S) = \sum_{v \in S} d(v)$ where $d(v)$ is the degree of $v$ in $G$.
Let $n=|V|$ and $m=|E|$.
The conductance of $G$ is defined as $\phi(G)=\min_{S:\vol(S) \leq m} \phi(S)$.
The conductance of a graph is an important parameter that is closely related to the expansion of a graph and the mixing time of a random walk~\cite{hoory-linial-wigderson}.
Finding a set of small conductance, called a sparse cut, is a well-studied algorithmic problem that has applications in different areas.
Several approximation algorithms are known for the sparsest cut problem.
The spectral partitioning algorithm by Cheeger's inequality~\cite{cheeger,alon-milman} finds a set of of conductance $\sqrt{\phi(G)}$ with volume at most $m$.
The linear programming rounding algorithm by Leighton and Rao~\cite{leighton-rao} finds a set of conductance $O(\phi(G)\log(n))$ with volume at most $m$.
The semidefinite programming rounding algorithm by Arora, Rao and Vazirani~\cite{arora-rao-vazirani} finds a set of conductance $O(\phi(G) \sqrt{\log(n)})$ with volume at most $m$.

Recently there has been much interest in studying the small sparsest cut problem, to determine $\phi_k(G) = \min_{S:\vol(S) \leq k} \phi(S)$ for a given $k$, and to find a set of smallest conductance among all sets of volume at most $k$.
This is also known as the expansion profile of the graph~\cite{lovasz-kannan,raghavendra-steurer-tetali}.
There are two main motivations for this problem.
One is the small set expansion conjecture~\cite{raghavendra-steurer}, which states that for every constant $\epsilon>0$ there exists a constant $\delta>0$ such that it is NP-hard to distinguish whether $\phi_{\delta m}(G) \leq \epsilon$ or $\phi_{\delta m}(G) \geq 1-\epsilon$.
This conjecture is shown to be closely related to the unique games conjecture~\cite{raghavendra-steurer}, and so it is of interest to understand what algorithmic techniques can be used to estimate $\phi_k(G)$.
There are bicriteria approximation algorithms for this problem using semidefinite programming relaxations:
Raghavendra, Steurer and Tetali~\cite{raghavendra-steurer-tetali} obtained an algorithm that finds a set $S$ with $\vol(S) \leq O(k)$ and $\phi(S) \leq O(\sqrt{\phi_k(G) \log(m/k)})$, and
Bansal et.al.~\cite{bansal+} obtained an algorithm that finds a set $S$ with $\vol(S)\leq (1+\epsilon)k$ and $\phi(S) \leq O(f(\epsilon)~ \phi_k(G) \sqrt{\log n \log(m/k)})$ for any $\epsilon>0$ where $f(\epsilon)$ is a function depends only on $\epsilon$.

Another motivation is the design of local graph partitioning algorithms in massive graphs. %~\cite{spielman-teng,anderson-chung-lang,anderson-peres}.
In some situations, we have a massive graph $G=(V,E)$ and a vertex $v \in V$, 
and we would like to identify a small set $S$ with small conductance that contains $v$ (if it exists).
The graph may be too big that it is not feasible to read the whole graph and run some nontrivial approximation algorithms.
So it would be desirable to have a local algorithm that only explores a small part of the graph, and outputs a set $S$ with small conductance that contains $v$, and the running time of the algorithm depends only on $\vol(S)$ and $\polylog(n)$.
All local graph partitioning algorithms are based on some random walk type processes.
The efficiency of the algorithm is measured by the work/volume ratio, which is defined as the ratio of the running time and the volume of the output set.
Spielman and Teng~\cite{spielman-teng} proposed the first local graph partitioning algorithm using truncated random walk, that returns a set $S'$ with $\phi(S') = O(\phi^{1/2}(S) \log^{3/2} n)$ if the initial vertex is a random vertex in $S$, and the work/volume ratio of the algorithm is $O(\phi^{-2}(S) \polylog(n))$.
Anderson, Chung, Lang~\cite{anderson-chung-lang} used local pagerank vectors to find a set $S'$ with $\phi(S') = O(\sqrt{\phi(S) \log k})$ and work/volume ratio $O(\phi^{-1}(S) \polylog(n))$, if the initial vertex is a random vertex in a set $S$ with $\vol(S)=k$.
Anderson and Peres~\cite{anderson-peres} used the volume-biased evolving set process to obtain a local graph partitioning algorithm with work/volume ratio $O(\phi^{-1/2} \polylog(n))$ and a similar conductance guarantee as in \cite{anderson-chung-lang}.
Note that the running time of these algorithms would be sublinear if the volume of the output set is small, which is the case of interest in massive graphs.

\subsection{Main Results}

We show that the techniques developed in local graph partitioning algorithms~\cite{spielman-teng,chung} can be used to obtain bicriteria approximation algorithms for the small sparsest cut problem.
We obtain a tradeoff between the conductance guarantee and the volume of the output set.

\begin{theorem} \label{thm:main}
Given an undirected graph $G=(V,E)$ and a parameter $k$,
there is a polynomial time algorithm to do the following:
\begin{enumerate}
\item
Find a set $S$ with $\phi(S) = O(\sqrt{\phi_k(G)/\epsilon})$ and $\vol(S) \leq k^{1+\epsilon}$ for any $\epsilon>1/k$.
\item
Find a set $S$ with $\phi(S) = O(\sqrt{\phi_k(G) \log k/\epsilon})$ and $\vol(S) \leq (1+\epsilon)k$ for any $\epsilon>2 \ln k/k$.
\end{enumerate}
\end{theorem}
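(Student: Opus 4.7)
The plan is to combine truncated random walk analysis with the Lovász--Simonovits sweep-cut framework, in the spirit of Spielman--Teng and Chung. Let $U$ be the hypothetical witness set with $\phi(U) \leq \phi_k(G) =: \phi$ and $\vol(U) \leq k$. First I would establish a ``good starting vertex'' lemma: if we draw $v$ from the degree distribution on $U$ and run the lazy simple random walk for $t$ steps, the expected probability of having escaped $U$ is at most $t\phi$ (each step leaks at most $\phi$ across $\delta(U)$), so by Markov's inequality at least half of the starting mass in $U$ gives escape probability $\leq 2t\phi$. Since we do not know $U$, the algorithm simply tries every vertex as a start and outputs the best set found.

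Next, I would use a truncated random walk: at each step, coordinates with $p_t(v) < \rho \cdot d(v)$ are rounded down to zero. By probability conservation the support of $p_t$ always has $\vol \leq 1/\rho$, so by choosing $\rho$ appropriately we directly control the output volume. The analytic tool is the Lovász--Simonovits curve $I(x,t) := \max_{A:\, \vol(A) \leq x} p_t(A)$, which is concave in $x$ and obeys a recursive contraction: if every sweep-cut prefix of $p_t$ has conductance exceeding $\theta$, then $I(\cdot,t)$ decays geometrically in $t$ at a rate depending on $\theta^2$. Combining the escape bound (which forces $I(k,t)$ to stay $\Omega(1)$) with this contraction (which would force $I(k,t)$ small if no sparse sweep existed) yields a sweep cut of conductance $O(\sqrt{\log(1/\rho)/t})$ up to lower-order terms. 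The truncation loss must also be controlled: a telescoping argument shows that the total $\ell_1$ mass lost over $t$ steps is at most $t\rho/\rho = t$ in the worst case, and with care the constants can be tuned so this is negligible.

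For part (1), I would set $\rho \approx 1/k^{1+\epsilon}$ and choose $t$ to balance the escape bound $t\phi \ll 1$ against the target conductance, yielding conductance $O(\sqrt{\phi/\epsilon})$ and support volume $k^{1+\epsilon}$; the $1/\epsilon$ arises because the ``slack'' $k^\epsilon$ in volume permits the random walk to mix adequately in $O(1/(\phi\epsilon))$ steps. For part (2), the tighter volume allowance $(1+\epsilon)k$ forces $\rho \approx 1/((1+\epsilon)k)$, so $\log(1/\rho) = \Theta(\log k)$ enters the conductance bound and produces the extra $\sqrt{\log k}$ factor. The main obstacle I expect is ensuring the \emph{output} set has the claimed volume, not merely that it is contained in a support of that size: a naive Cheeger sweep could select a prefix $S_j$ that nearly exhausts the support, and in part (2) the support size alone does not satisfy the $(1+\epsilon)k$ bound. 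The argument will therefore need to restrict attention to sweep levels satisfying the volume constraint and certify, via monotonicity of the Lovász--Simonovits curve and the concentration of $p_t$ near $U$, that at least one such small-volume sweep level already achieves the desired conductance; failing that, one iteratively peels a ``heavy head'' and reapplies the sweep, which I expect is where the lower bounds $\epsilon > 1/k$ and $\epsilon > 2\ln k/k$ enter.
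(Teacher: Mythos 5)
Your overall scaffolding---sweep cuts, the Lov\'asz--Simonovits curve, an upper bound that decays like $(1-\theta^2/8)^t$ when all small sweep sets have conductance $\geq \theta$, and a lower bound that keeps the curve high at $x=k$ for a good starting vertex---matches the structure of the paper's proof. But there is a genuine gap in your lower-bound lemma, and it is precisely what makes part~(1) work.

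You propose the Spielman--Teng escape-probability argument: starting from a random (degree-weighted) vertex of $U$, after $t$ lazy steps $p_t(U) \geq 1 - O(t\phi)$. This bound degrades \emph{linearly} and is vacuous once $t \gtrsim 1/\phi$. Your own parameter choice exposes the problem: you say you would run for $t \approx 1/(\phi\epsilon)$ steps, at which point $t\phi \approx 1/\epsilon \gg 1$ and the escape bound certifies nothing. More fundamentally, to force $\sqrt{k}(1-\phi_1^2/8)^t$ below the target threshold with $\phi_1 = O(\sqrt{\phi/\epsilon})$ you must take $t = \Theta(\epsilon\ln k / \phi)$; for any $\epsilon \gtrsim 1/\ln k$ this exceeds $1/\phi$ and the escape bound gives nothing. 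If you cap $t$ at $\Theta(1/\phi)$ to keep the escape bound nontrivial, the best you can extract from the upper-bound contraction is $\phi_1 = O(\sqrt{\phi\log k})$---i.e., you recover part~(2) with $\epsilon$ constant, but not part~(1). The paper instead proves a \emph{multiplicative} lower bound (Theorem~\ref{t:lb}): using the bottom eigenvector of the Dirichlet (restricted) normalized Laplacian $\mathcal{L}_S$, whose eigenvalue is $\leq \phi(S)$, there is an initial vertex with $p_t(S) \geq (1-\phi/2)^t$. This stays meaningful for all $t$, in particular at $t = \epsilon\ln k/(2\phi)$ where it gives $p_t(S) \geq k^{-\epsilon/2}$, which strictly dominates the upper bound $k^{-\epsilon}+k^{-1/2}$. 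That geometric-versus-linear distinction is the essential idea you are missing; without it the $1/\epsilon$ inside the square root in part~(1) cannot be obtained.

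Two smaller remarks. First, for the polynomial-time claim of Theorem~\ref{thm:main} you do not need truncation at all---the paper just runs the exact lazy walk from every start vertex and takes the best sweep set among those with volume $\leq ck$; truncation is only introduced later for the \emph{local} (sublinear-time) version. Second, your bound on the truncation loss, ``total $\ell_1$ mass lost over $t$ steps is at most $t\rho/\rho = t$,'' is not a usable estimate (it is $\geq 1$ already at $t=1$); what one actually needs, and what the paper proves, is the pointwise bound $p_t \leq \tilde p_t + \rho\, t\, d$, so that the loss over a target set $U$ is $\rho\, t\, \vol(U)$, controlled by choosing $\rho$ small relative to $1/(t\,\vol(U))$.
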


%Theorem~\ref{thm:sparse} provides a tradeoff between the conductance guarantee and the volume of the output set.  
%We state the other extreme of the tradeoff in the following theorem.

%\begin{theorem} \label{thm:small}
%Given an undirected graph $G=(V,E)$ and a parameter $k$,
%there is a polynomial time algorithm to find a set $S$ with $\vol(S) \leq (1+\epsilon)k$ and $\phi(S) = O(\sqrt{\phi_k(G) \log k/\epsilon})$ for any $\epsilon>0$.
%\end{theorem}

For the small sparsest cut problem, when $k$ is sublinear ($k=O(m^c)$ for $c<1$), the performance guarantee of the bicriteria approximation algorithm in Theorem~\ref{thm:main}(2) is similar to that of Raghavendra, Steurer and Tetali~\cite{raghavendra-steurer-tetali}.
Also, when $k$ is sublinear, the conductance guarantee of Theorem~\ref{thm:main}(1) is independent of $n$, which matches the performance of spectral partitioning while having a bound on the volume of the output set.
These show that random walk algorithms can also be used to give nontrivial bicriteria approximations for the small sparsest cut problem.
%We discuss in Section~\ref{} that a significant strengthening of Theorem~\ref{thm:sparse} can be used to disprove the small set expansion conjecture.
Moreover the algorithms can be implemented locally by using the truncated random walk algorithm. %in~\cite{spielman-teng} and the analysis in~\cite{chung}.

\begin{theorem} \label{thm:local}
For an undirected graph $G=(V,E)$ and a set $U \subseteq V$,
given $\phi \geq \phi(U)$ and $k \geq \vol(U)$,
%the truncated random walk algorithm with a random initial vertex in $U$ can do the following with a constant probability:
%there is a subset $U' \subseteq U$ with $\vol(U') \geq k/4$ such that for any $u \in U'$, the truncated random walk algorithm with $u$ as the initial vertex can do the following:
%\begin{enumerate}
%\item
there exists an initial vertex such that the truncated random walk algorithm can find a set $S$ with $\phi(S) \leq O(\sqrt{\phi/\epsilon})$ and $\vol(S) \leq O(k^{1+\epsilon})$ for any $\epsilon>2/k$.
%\item
%Find a set $S$ with $\phi(S) \leq O(\sqrt{\phi \log k/\epsilon})$ and $\vol(S) \leq (1+\epsilon)k$ for any $\epsilon>2/k$.
%\end{enumerate} 
The work/volume ratio of the algorithm is $O(k^{\epsilon} \phi^{-2})$.
\end{theorem}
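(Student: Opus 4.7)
The plan is to port the Lovász--Simonovits / Cheeger-type argument behind Theorem~\ref{thm:main}(1) into the truncated random walk framework of Spielman--Teng. The local algorithm starts a lazy random walk from a carefully chosen vertex $v\in U$, truncates the distribution at every step so that its support stays small, runs for $t=\Theta(1/(\epsilon\phi))$ steps (the same length used in Theorem~\ref{thm:main}(1) to produce conductance $O(\sqrt{\phi/\epsilon})$), and returns the best sweep set over the level sets of the final vector.

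The first ingredient is the choice of $v$. Since $\phi(U)\le\phi$, a walk started from a degree-weighted random vertex of $U$ loses at most $\phi$ mass from $U$ per step, so its expected escape in $t$ steps is at most $t\phi=O(1/\epsilon)$. A standard averaging/Markov argument then produces a single $v\in U$ such that the un-truncated walk $p_s$ from $v$ satisfies $p_s(U)\ge \tfrac{1}{2}$ for every $s\le t$; the theorem only asks for \emph{existence} of such a starting vertex, so this is enough. The second ingredient is the truncation itself: at every step we zero out each coordinate with $p(u)/d(u)<\rho$ for a per-degree threshold $\rho=\Theta(1/k^{1+\epsilon})$. This immediately enforces $\vol(\mathrm{supp}(\tilde p_s))\le 1/\rho=O(k^{1+\epsilon})$, which simultaneously bounds the volume of any candidate sweep set and the per-step work (an update only touches neighbors of currently-supported vertices).

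The main obstacle will be showing that the truncation does not spoil the concentration argument used in Theorem~\ref{thm:main}(1). Following Spielman--Teng, the total $\ell_1$ error accumulated over $t$ truncations is at most $t\rho\cdot\vol(\mathrm{supp})$, which with our parameter choice is much smaller than the constant gap $p_t(U)-\vol(U)/\vol(V)$ guaranteed by the starting-vertex step. Hence the Lovász--Simonovits curve inequality still applies to $\tilde p_t$, up to an $o(1)$ additive error, and the proof of Theorem~\ref{thm:main}(1) goes through verbatim: some level set $S$ of $\tilde p_t$ has conductance $O(\sqrt{\phi/\epsilon})$ and volume at most $|\mathrm{supp}(\tilde p_t)|\le O(k^{1+\epsilon})$. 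The stated $O(k^\epsilon\phi^{-2})$ work/volume ratio then follows from a routine accounting of per-step cost (bounded by $1/\rho$) multiplied by the number of steps, divided by $\vol(S)$, using the Spielman--Teng trick of running the walk in geometrically decreasing truncation stages so the dominant factor collapses to $O(k^\epsilon\phi^{-2})$.
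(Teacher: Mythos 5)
Your high-level architecture (truncate, track the Lov\'asz--Simonovits curve, sweep the level sets of $\tilde p_t$) matches the paper's, but two of the quantitative cornerstones in your sketch are wrong in ways that break the argument, and both point to missing the key idea of the paper.

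\textbf{The starting-vertex lower bound.} You argue that since the expected escape mass in $t$ steps is at most $t\phi = O(1/\epsilon)$, a Markov argument yields a $v\in U$ with $p_s(U)\geq 1/2$ for all $s\leq t$. But $O(1/\epsilon)$ is larger than $1$ for any $\epsilon<1$, so the bound on the expected escape is vacuous and Markov gives nothing. Even after correcting the walk length to the one the paper actually uses, $t=\Theta(\epsilon\ln k/\phi)$ (not $\Theta(1/(\epsilon\phi))$ as you wrote), the linear escape bound $p_t(U)\geq 1-t\phi/2 = 1-\epsilon\ln k/4$ can still be negative. The paper avoids this entirely: Theorem~\ref{t:lb} constructs the starting distribution from the Perron eigenvector of the restricted normalized Laplacian $\mathcal{L}_S$ and obtains the \emph{multiplicative} decay $p_t(U)\geq(1-\phi/2)^t\approx k^{-\epsilon/4}$. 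That exponential rather than linear control of the retained mass is exactly what makes the choice $t=\Theta(\epsilon\ln k/\phi)$ work, and it is the content that your Markov argument does not replace.

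\textbf{The truncation threshold.} You set the per-step threshold $\rho=\Theta(k^{-1-\epsilon})$, bound the support volume by $1/\rho$, and assert the accumulated $\ell_1$ error is negligible. But the truncation error compounds: by Lemma~\ref{t:apd}, $p_t \leq \tilde p_t + \rho t\, d$, so the mass lost on $U$ is up to $\rho\, t\, \vol(U) = \Theta(t\, k^{-\epsilon})$, which is not small compared to the signal $k^{-\epsilon/4}$ when $t\gg 1$. The paper therefore takes $\epsilon' = k^{-1-\epsilon}/(20T)$ --- your $\rho$ divided by an extra factor of the number of steps $T$ --- precisely so the accumulated loss stays a small fraction of the retained mass. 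This smaller threshold is also what produces the $\phi^{-2}$ in the work bound: the running time is $O(T/\epsilon') = O(T^2 k^{1+\epsilon})$, and $T\propto 1/\phi$ contributes $\phi^{-2}$; with your larger $\rho$ you would get only $\phi^{-1}$, inconsistent with the claimed work/volume ratio.

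Fixing both issues essentially forces you back onto the paper's route: prove the lower bound via the restricted-Laplacian eigenvector (Theorem~\ref{t:lb}), run for $T=\epsilon\ln k/(2\phi)$ steps, and set the truncation parameter to $\Theta(k^{-1-\epsilon}/T)$; then the curve upper bound from Lemma~\ref{l:aub} and the approximate lower bound from Lemma~\ref{t:apd} combine as in the paper's contradiction argument.
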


%The local algorithm in Theorem~\ref{thm:local}(2) gives a tighter bound on the volume of the output set, while having the same conductance guarantee as in~\cite{anderson-chung-lang,anderson-peres}.
%Previously the local pagerank algorithm in~\cite{anderson-chung-lang} guarantees that the volume of the output set is at most $O(\vol(U) \log n)$, and the evolving set algorithm in~\cite{anderson-peres} guarantees that the volume of the output set is at most $O(\vol(U))$.
%Other local graph partitioning algorithms~\cite{,} did not state an explicit bound on the output size, and so did not rule out the possibility that the algorithm outputs a large sparse cut while there is a small sparse cut in the graph. 
When $k$ is sublinear, the interesting case of local graph partitioning algorithms,
the conductance guarantee of Theorem~\ref{thm:local} matches that of spectral partitioning, improving on the conductance guarantees in previous local graph partitioning algorithms.
However, we note that our notion of a local graph partitioning algorithm is much weaker than previous work~\cite{spielman-teng,anderson-chung-lang,anderson-peres}, where they proved that a random initial vertex $u$ will work with a constant probability.
We only prove that there exists an initial vertex that will work, and unable to prove the high probability statement.

In Section 4 we discuss a connection to the small set expansion conjecture.
%In Section~\ref{} we discuss some connections to the heat kernel pagerank proposed by Chung~\cite{chung} and the small set expansion conjecture~\cite{raghavendra-steurer}.

\subsection{Techniques}

The techniques are from the work of Spielman and Teng~\cite{spielman-teng} and Chung~\cite{chung}.
Our goal in Theorem~\ref{thm:main}(1) is equivalent to distinguish the following two cases:
(a) there is a set $S$ with $\vol(S) \leq k$ and $\phi(S) \leq \varphi$, or
(b) the conductance of every set of volume at most $ck$ is at least $\Omega(\sqrt{\varphi})$ for some $c > 1$.
As in~\cite{spielman-teng}, we use the method of Lov\'asz and Simonovits~\cite{lovasz-simonovits} that considers the total probability of the $k$ edges with largest probability after $t$ steps of random walk, call this number $C_t(k)$.
In case (a), we use the idea of Chung~\cite{chung} that uses the local eigenvector of $S$ of the Laplacian matrix to show that there exists an initial vertex such that $C_t(k) \geq (1-\frac{\varphi}{2})^t$.
%This improves on the analysis in~\cite{spielman-teng} that $C_t(k) \geq 1-t\varphi$.
In case (b), we use a result of Lov\'asz and Simonovits~\cite{lovasz-simonovits} to show that $C_t(k) \leq \frac{1}{c} + \sqrt{k}(1-M\varphi)^t$ for a large enough constant $M$, no matter what is the initial vertex of the random walk.
Hence, say when $c \geq k^{0.01}$, 
by setting $t=\Theta(\log k/\varphi)$,
we expect that $C_t(k)$ is significantly greater than $1/c$ in case (a) but at most $1/c$ plus a negligible term in case (b), and so we can distinguish the two cases.
To prove Theorem~\ref{thm:local}(1), we use the truncated random walk algorithm as in~\cite{spielman-teng} to give a bound on the work/volume ratio.
Theorem~\ref{thm:main}(2) is a corollary of Theorem~\ref{thm:main}(1).

\section{Finding Small Sparse Cuts} \label{s:main}

%Let $d(u)$ be the degree of vertex $u$.
%Let $\vol(S) = \sum_{u \in S} d(u)$ be the total degree in a subset $S \subseteq V$.

%\begin{definition}
%The conductance of a subset $S \subseteq V$ with $\vol(S) \le m$, denoted as $\phi(S)$, is $\frac{|\delta(S)|}{\vol(S)}$, where $\delta(S) = \{(u,v) \in E \st u \in S, v \not \in S\}$ is the cut edges induced by $S$.
%The conductance of a graph, denoted as $\phi(G)$, is $\min_{S \subseteq V, \vol(S) \le m} \phi(S)$.
%\end{definition}

%Our algorithm is simple.
%For every vertex $v \in V$, we compute the probability distribution $p_t$ on the vertices after $t$ steps of random walk for each $1 \leq t \leq T$.
%At each step, we sort the vertices by the value $p_t(v)/d(v)$, and consider all the level sets.

The organization of this section is as follows.
First we review some basics about random walk in undirected graphs.
Then we present our algorithm in Theorem~\ref{thm:main} and the proof outline, and then we present the analysis and complete the proof of Theorem~\ref{thm:main}.

\subsection{Random Walk}

In the following we assume $G = (V, E)$ is a simple unweighted undirected connected graph with $n = |V|$ vertices and $m = |E|$ edges.
Our algorithms are based on random walk.
Let $p_0$ be an initial probability distribution on vertices.
Let $A$ be the adjacency matrix of $G$, 
$D$ be the diagonal degree matrix of $G$,
and $W = \frac12(I + D^{-1}A)$ be the lazy random walk matrix.
The probability distribution after $t$ steps of lazy random walk is defined as $p_t = p_0W^t$.
(For convenience, we use $p_t$ to denote a row vector, while all other vectors by default are column vectors.)
For a subset $S \subseteq V$, we use $p_t(S)$ to denote $\sum_{u \in S} p_t(u)$.

To analyze the probability distribution after $t$ steps of lazy random walk, we use the method developed by Lov\'asz and Simonovits~\cite{lovasz-simonovits} as in other local graph partitioning algorithms~\cite{spielman-teng,anderson-chung-lang}.
We view the graph as directed by replacing each undirected edge with two directed edges with opposite directions.
Given a probability distribution $p$ on vertices, each directed edge $e = uv$ is assigned probability $q(e) = p(u)/d_u$.
Let $e_1, e_2, \ldots, e_{2m}$ be an ordering of the directed edges 
such that $q(e_1) \geq q(e_2) \geq \ldots \geq q(e_{2m})$.
The curve introduced by Lov\'asz and Simonovits $C : [0, 2m] \to [0, 1]$ is defined as follows:
for integral $x$, $C(x) = \sum_{i=1}^{x} q(e_i)$;
for fractional $x = \lfloor x \rfloor + r$, $C(x) = (1 - r)C(\lfloor x \rfloor) + rC(\lceil x \rceil)$.
Let $C_t$ be the curve when the underlying distribution is $p_t$.
Let $v_1, v_2, \ldots, v_n$ be an ordering of the vertices such that $p_t(v_1)/d(v_1) \ge p_t(v_2)/d(v_2) \ge \dots \ge p_t(v_n)/d(v_n)$.
Then $C_t(\sum_{i=1}^j d(v_i)) = \sum_{i = 1}^j p_t(v_i)$ for all $j \in [n]$.
We call the points $x_j = \sum_{i = 1}^j d(v_i)$ extreme points, and note that the curve is linear between two extreme points.
We also call the sets $S_{t, j} = \{v_1, \ldots, v_j\}$ for $1 \leq j \leq n$ the level sets at time $t$.

The curve $C_t$ is concave, and it approaches the straight line $x/(2m)$ when $p_t$ approaches the stationary distribution.
Lov\'asz and Simonovits~\cite{lovasz-simonovits} analyzed the convergence rate of this curve to the straight line based on the conductances of the level sets.

\begin{lemma}[Lov\'asz-Simonovits~\cite{lovasz-simonovits}]
\label{l:bc} % Below Chord
Let $x = x_j \le m$ be an extreme point at time $t$ and $S = S_{t, j}$ be the corresponding level set.
If $\phi(S) \ge \varphi$, then $C_{t}(x) \le \frac12(C_{t-1}(x - \varphi x) + C_{t-1}(x + \varphi x))$.
\end{lemma}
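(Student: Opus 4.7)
The plan is to express $p_t(S)$, with $S = S_{t,j}$ and $x = x_j$, as a sum of $q_{t-1}$-values over directed edges; then split this sum as the average of two edge-sum expressions of cardinalities $x - |\delta(S)|$ and $x + |\delta(S)|$; bound each by the curve $C_{t-1}$ evaluated at the corresponding cardinality; and finally invoke concavity of $C_{t-1}$ to pass from the actual cut size $|\delta(S)|$ to the weaker $\varphi x$ guaranteed by the hypothesis $\phi(S) \ge \varphi$.

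For the first step, I would start from the lazy-walk rule $p_t(v) = \tfrac12 p_{t-1}(v) + \tfrac12 \sum_{u \sim v} p_{t-1}(u)/d(u)$ and rewrite the diagonal term as $p_{t-1}(v) = \sum_{u \sim v} p_{t-1}(v)/d(v) = \sum_{u \sim v} q_{t-1}(v \to u)$, so that $p_t(v) = \tfrac12 \sum_{u \sim v}[q_{t-1}(v\to u) + q_{t-1}(u\to v)]$. Summing over $v \in S$ and counting edge contributions, each interior directed edge (both endpoints in $S$) appears with coefficient $1$ and each cut directed edge with coefficient $\tfrac12$. Writing $e_c = |\delta(S)|$, the interior directed edges number $\vol(S) - e_c = x - e_c$ while the cut directed edges number $2e_c$, which lets me display
\[
p_t(S) \;=\; \tfrac12\bigl(q_I\bigr) \,+\, \tfrac12\bigl(q_I + q_{\text{in}} + q_{\text{out}}\bigr),
\]
an average of two sums of $q_{t-1}$-values over edge sets of sizes $x - e_c$ and $x + e_c$ respectively (where $q_I$, $q_{\text{in}}$, $q_{\text{out}}$ denote the $q_{t-1}$-sums over interior, entering, and leaving directed edges). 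Since by definition the curve $C_{t-1}$ majorises the sum of $q_{t-1}$-values on any set of a given number of directed edges, each half is bounded above by $C_{t-1}(x - e_c)$ and $C_{t-1}(x + e_c)$.

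To conclude, I would use $\phi(S) \ge \varphi$ in the form $e_c \ge \varphi x$ together with concavity of $C_{t-1}$: the map $a \mapsto C_{t-1}(x - a) + C_{t-1}(x + a)$ is nonincreasing for $a \ge 0$, so enlarging $a$ from $\varphi x$ up to $e_c$ can only decrease the sum, giving $C_t(x) = p_t(S) \le \tfrac12\bigl(C_{t-1}(x - \varphi x) + C_{t-1}(x + \varphi x)\bigr)$ as required. The hard part, and the only nonroutine idea, is step two — spotting the combinatorial decomposition that produces edge sets of sizes $x \mp e_c$, since this is precisely where $\phi(S)$ enters. The remaining pieces (concavity of $C_{t-1}$, which follows from its definition as a sorted cumulative sum, and the monotonicity of the symmetric chord) are standard.
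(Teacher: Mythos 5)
The paper cites this lemma from Lov\'asz--Simonovits and does not reprove it, so there is no internal proof to compare against; your reconstruction is correct and is exactly the standard argument from the cited work. The bookkeeping checks out: summing the lazy-walk identity $p_t(v) = \tfrac12\sum_{u\sim v}\bigl(q_{t-1}(v\to u)+q_{t-1}(u\to v)\bigr)$ over $v\in S$ gives interior directed edges coefficient $1$ and cut directed edges coefficient $\tfrac12$, the two resulting edge sets have sizes $x-|\delta(S)|$ and $x+|\delta(S)|$ (both in $[0,2m]$ since $|\delta(S)|\le x\le m$), each sum is majorised by $C_{t-1}$ evaluated at that size, and the final pass from $|\delta(S)|$ to $\varphi x$ uses the fact that $a\mapsto C_{t-1}(x-a)+C_{t-1}(x+a)$ is nonincreasing by concavity.
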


\subsection{Algorithm}

Our algorithm is simple.
For each vertex $v$, we use it as the initial vertex of the random walk, and compute the probability distributions $p_t$ for $1 \leq t \leq O(n^2 \ln n)$.
Then we output the set of smallest conductance among all level sets $S_{t,j}$ (of all initial vertices) of volume at most $ck$, where in Theorem~\ref{thm:main}(1) we set $c=k^{\epsilon}$ and in Theorem~\ref{thm:main}(2) we set $c=1+\epsilon$.
Clearly this is a polynomial time algorithm.

To analyze the performance of the algorithm, we give upper and lower bound on the curve based on the conductances. 
On one hand, we use Lemma~\ref{l:bc} to prove that if all level sets of volume at most $ck$ are of conductance at least $\phi_1$, then the curve satisfies $C_t(x) \le f_t(x) := \frac{x}{ck} + \sqrt x(1 - \frac{\phi_1^2}8)^t$ for all $x \le k$.
Informally, this says that if $\phi_1$ is large, then $C_t(k)$ is at most $1/c$ plus a negligible term when $t$ is large enough.
This statement holds regardless of the initial vertex of the random walk.
On the other hand, if there exists a set $S$ of volume at most $k$ with conductance $\phi_2$, then we use the idea of Chung~\cite{chung} that uses the local eigenvector of $S$ of the Laplacian matrix to show that there exists an initial vertex for which $C_t(k) \geq (1-\frac{\phi_2}{2})^t$.
Informally, this says that if $\phi_2$ is small, then $C_t(k)$ is significantly larger than $1/c$ if $c$ is large.
%This is proved by using the local eigenvector of $S$ of the Laplacian matrix, and we will present a proof for completeness.
Finally, by combining the upper and lower bound for $C_t(k)$ and choosing an appropriate $t$, we show that $\phi_1 \leq O(\sqrt{\phi_2})$ when $c=k^{\epsilon}$ and $\phi_1 \leq O(\sqrt{\phi_2 \ln k})$ when $c=1+\epsilon$.
Hence the algorithm can find a level set with the required conductance.

\subsection{Upper Bound}

We prove the upper bound using Lemma~\ref{l:bc}.
We note that the following statement is true for any initial probability distribution, in particular when $p_0 = \chi_v$ for any $v$.

\begin{theorem}%[Upper bound]
\label{t:ub} % Upper Bound
Suppose for all $t' \le t$ and $i \in [n]$, we have $\phi(S_{t', i}) \ge \phi_1$whenever $\vol(S_{t', i}) \le l \le m$.
Then the curve satisfies $C_t(x) \le f_t(x) := \frac xl + \sqrt x(1 - \frac{\phi_1^2}8)^t$ for all $x \le l$.
\end{theorem}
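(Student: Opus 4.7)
The plan is to establish $C_t(x) \le f_t(x)$ on $[0,l]$ by induction on $t$, driven by the Lov\'asz--Simonovits recurrence (Lemma~\ref{l:bc}).

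For the base case $t = 0$, I would use that $C_0$ is concave and piecewise linear with $C_0(0) = 0$ and $C_0 \le 1$ everywhere, so $C_0(x) \le \min(x, 1) \le \sqrt{x}$ on $[0, 2m]$; hence $C_0(x) \le \sqrt{x} \le f_0(x)$.

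For the inductive step, my first observation is that it suffices to verify $C_t(x) \le f_t(x)$ at the extreme points $x_j \le l$: between consecutive extreme points $C_t$ is linear while $f_t(x) = x/l + \sqrt{x}(1 - \phi_1^2/8)^t$ is concave (linear plus concave $\sqrt{x}$), so a bound at the endpoints propagates to the interior. Fix such an extreme point $x$ and let $S_{t,j}$ be its level set. Since $\vol(S_{t,j}) = x \le l$, the hypothesis gives $\phi(S_{t,j}) \ge \phi_1$, and Lemma~\ref{l:bc} yields
\[
C_t(x) \;\le\; \tfrac{1}{2}\bigl(C_{t-1}((1-\phi_1)x) + C_{t-1}((1+\phi_1)x)\bigr).
\]
In the generic case $(1+\phi_1)x \le l$, I would invoke the inductive hypothesis on both right-hand terms; the computation then reduces to the standard estimates $\tfrac{1}{2}(\sqrt{1-\phi_1} + \sqrt{1+\phi_1}) \le \sqrt{1 - \phi_1^2/4} \le 1 - \phi_1^2/8$ (by squaring and by $\sqrt{1-y} \le 1 - y/2$), delivering $C_t(x) \le x/l + \sqrt{x}(1 - \phi_1^2/8)^t = f_t(x)$.

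The main obstacle is the boundary case $(1 + \phi_1)x > l$, where the induction hypothesis is unavailable for $C_{t-1}((1+\phi_1)x)$. The plan is to use the trivial bound $C_{t-1}((1+\phi_1)x) \le 1$ while still applying the inductive hypothesis to $C_{t-1}((1-\phi_1)x)$. After substitution, the desired inequality $\tfrac{1}{2}(f_{t-1}((1-\phi_1)x) + 1) \le f_t(x)$ rearranges to
\[
\tfrac{1}{2} - \tfrac{(1+\phi_1)x}{2l} \;\le\; \sqrt{x}\,(1 - \phi_1^2/8)^{t-1}\bigl((1 - \phi_1^2/8) - \tfrac{1}{2}\sqrt{1-\phi_1}\bigr).
\]
The case hypothesis $(1+\phi_1)x > l$ forces the left-hand side to be non-positive, while the right-hand side is manifestly non-negative for $\phi_1 \in [0,1]$, which closes the argument.
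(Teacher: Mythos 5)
Your proof follows essentially the same route as the paper: induction on $t$ through the Lov\'asz--Simonovits recurrence, reduction to extreme points via concavity, and the Taylor-type estimate $\tfrac12(\sqrt{1-\phi_1}+\sqrt{1+\phi_1}) \le 1-\phi_1^2/8$. The one place you are more explicit than the paper is the case $(1+\phi_1)x > l$: you split it off and close it by observing the rearranged inequality has nonpositive left side and nonnegative right side, whereas the paper simply writes ``by induction,'' tacitly using that for $y>l$ the bound $C_{t-1}(y)\le 1 < y/l \le f_{t-1}(y)$ is vacuous. Both resolutions are correct.

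There is one small gap in your reduction step that the paper does address. You claim it suffices to check the bound at the extreme points $x_j \le l$, with concavity of $f_t$ and piecewise linearity of $C_t$ ``propagating to the interior.'' That propagation argument needs the bound at \emph{both} endpoints of each linear segment of $C_t$; for the last segment, say $[x_i, x_{i+1}]$ with $x_i \le l < x_{i+1}$, you have it at $x_i$ but not at the right endpoint, so you cannot immediately conclude $C_t \le f_t$ on $(x_i, l]$. The paper handles this by separately verifying the point $x=l$ (trivially, since $f_t(l)\ge 1 \ge C_t(l)$) and noting $C_t$ is linear on $[x_i,l]$. You should add this one line --- or equivalently note that $f_t(y)\ge 1\ge C_t(y)$ for all $y\ge l$, which also fixes the right endpoint $x_{i+1}$ --- to make the reduction airtight.
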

\begin{proof}%[Proof of theorem \ref{t:ub}]
Let the extreme points $x_i$ satisfy $0 = x_0 \le x_1 \le x_2 \le \dots \le x_i \le l < x_{i + 1}$.
Note that $C_t$ is linear between extreme points and between $x_i$ and $l$, and $f_t$ is concave.
So we only need to show the inequality for extreme points and the point $l$.
At the point $x = l$, the inequality always hold as $f_t(l) \ge 1 \ge C_t(l)$ for any $t$.
Now we would prove by induction.
When $t = 0$ the inequality is trivial as $f_0(x) \ge 1 \ge C_0(x)$ for all $x \ge 1$.
When $t > 0$ and $x$ is an extreme point,
\begin{eqnarray*}
C_t(x) & \le & \frac12(C_{t - 1}(x - \phi_1 x) + C_{t - 1}(x + \phi_1 x)) \quad {\rm (by~Lemma~\ref{l:bc})}\\
& \le & \frac12(f_{t - 1}(x - \phi_1 x) + f_{t - 1}(x + \phi_1 x)) \quad {\rm (by~induction)}\\
& = & \frac{x}{l} + \frac12 \sqrt x(1 - \frac{\phi_1^2}8)^{t - 1}(\sqrt{1 - \phi_1} + \sqrt{1 + \phi_1})\\
& \le & \frac{x}{l} + \sqrt x(1 - \frac{\phi_1^2}8)^t,
\end{eqnarray*}
where the last inequality follows from Taylor expansions of $\sqrt{1-\phi_1}$ and $\sqrt{1+\phi_1}$.
\end{proof}

\subsection{Lower Bound}

%We present a proof of the lower bound of Chung~\cite{chung} for completeness.
The idea is to use the local eigenvector of $S$ of the normalized Laplacian matrix to show that there is an initial distribution such that $p_t(S) \geq (1-\frac{\phi_2}{2})^t$.

\begin{theorem}%[Chung~\cite{chung}] %[Lower bound]
\label{t:lb} % Lower Bound
Assume $S \subseteq V$ where $\vol(S) \le m$ and $\phi(S) \le \phi_2$.
Then there exists a vertex $v$ such that if $p_0 = \chi_v$, then $p_t(S) \ge (1 - \frac{\phi_2}{2})^t$.
\end{theorem}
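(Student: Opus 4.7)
The plan is to invoke Chung's trick of exploiting a localized Perron eigenvector of the lazy walk restricted to $S$. Let $W_S$ denote the principal submatrix of $W$ indexed by $S$. Since $W_S$ is entrywise non-negative, Perron--Frobenius yields a non-negative column vector $\psi$ supported on $S$ with $W_S\psi = \mu\psi$, where $\mu = \rho(W_S)$. I will normalize so that $\max_v \psi(v) = 1$, let $v^*$ be a maximizer, and take $p_0 = \chi_{v^*}$.

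The first main step is a domination argument lifting $\psi$ from $W_S$ to $W$. For $v \in S$, $(W\psi)(v) = (W_S\psi)(v) = \mu\psi(v)$ since $\psi$ vanishes outside $S$; for $v \notin S$, $(W\psi)(v) \ge 0 = \mu\psi(v)$. Iterating and using that $W$ is entrywise non-negative gives $W^t\psi \ge \mu^t\psi$ entrywise. Since also $\chi_S \ge \psi$ entrywise (as $0 \le \psi \le 1$ and $\psi$ is supported on $S$),
\[
p_t(S) \;=\; (W^t \chi_S)(v^*) \;\ge\; (W^t\psi)(v^*) \;\ge\; \mu^t \psi(v^*) \;=\; \mu^t.
\]

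The second main step is to show $\mu \ge 1 - \phi_2/2$ by identifying $\mu$ with a Dirichlet eigenvalue. Writing $W = \tfrac12(I + D^{-1}A)$ and conjugating the principal submatrix by $D_S^{1/2}$, one sees that $W_S$ has the same spectrum as $\tfrac12(I + (D^{-1/2}AD^{-1/2})_S)$, so $\mu = 1 - \lambda_S/2$, where $\lambda_S$ is the smallest eigenvalue of the principal submatrix on $S$ of the normalized Laplacian $\mathcal L = I - D^{-1/2}AD^{-1/2}$. Evaluating the Rayleigh quotient of $\mathcal L_S$ at the test vector $D^{1/2}\chi_S$ (supported on $S$) yields exactly $|\delta(S)|/\vol(S) = \phi(S)$, so $\lambda_S \le \phi(S) \le \phi_2$, hence $\mu \ge 1 - \phi_2/2$, which combined with the bound above gives $p_t(S) \ge (1-\phi_2/2)^t$.

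I expect the most delicate point to be the second step: it is the easy direction of a Dirichlet-type Cheeger inequality, and care is needed so that the symmetric conjugate is applied to the correct principal submatrix and $W_S$ is spectrally related to $\mathcal L_S$. Once that identification is in hand, the iteration argument is routine, and Perron--Frobenius for general non-negative matrices supplies the required non-negative eigenvector even if $W_S$ fails to be irreducible (e.g., when $S$ is disconnected, in which case the argument simply picks out whichever component attains $\rho(W_S)$).
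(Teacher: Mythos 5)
Your proof is correct and rests on the same two pillars as the paper's: (i) the easy direction of a Dirichlet Cheeger inequality, namely $\lambda_S \le \phi(S)$ via the Rayleigh quotient of $\mathcal L_S$ at the test vector $D_S^{1/2}\vec 1$, together with the similarity $D_S^{1/2}W_S D_S^{-1/2} = I_S - \tfrac12\mathcal L_S$; and (ii) a localized Perron eigenvector. Where you differ from the paper is in how a single starting vertex is extracted. The paper takes $p_{0,S}\propto D_S^{1/2}v_S$ (with $v_S$ the bottom eigenvector of $\mathcal L_S$), proves $p_{t,S}\ge(1-\lambda_S/2)^t p_{0,S}$ by induction, and then appeals to linearity of the walk together with the fact that $p_0$ is a convex combination of $\chi_v$'s to conclude some single $v\in S$ works. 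You instead choose $v^*=\arg\max_v \psi(v)$ for the Perron vector $\psi$ of $W_S$ (normalized to $\max\psi=1$), and use the entrywise domination chain $W^t\chi_S \ge W^t\psi \ge \mu^t\psi$ (relying on nonnegativity of $W$ and $\chi_S\ge\psi$) to read off the bound at $v^*$ directly. This domination argument is a pleasant simplification: it avoids the averaging step, and, as you note, it cleanly handles disconnected $S$ through Perron--Frobenius for reducible nonnegative matrices, whereas the paper instead passes to a connected component WLOG. The underlying spectral content is identical in both proofs.
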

\begin{proof}%[Proof of theorem \ref{t:lb}]
Let $\mathcal L = D^{-\frac12}LD^{-\frac12}$ be the normalized Laplacian matrix, where $L=D-A$ is the Laplacian matrix of the graph.
For any matrix $M$ with rows and columns indexed by $V$, let $M_S$ be the $|S| \times |S|$ submatrix of $M$ with rows and columns indexed by the vertices in $S$.
Consider the smallest eigenvalue $\lambda_S$ of $\mathcal L_S$ and its corresponding eigenvector $v_S$.
Let $\chi_S$ be the characteristic vector of $S$.
We have
\[(D_S^{\frac12}\vec1)^T\mathcal L_S(D_S^{\frac12}\vec1) = \vec1^TL_S\vec1
= \sum_{e = uv \in E} (\chi_S(u) - \chi_S(v))^2 = |\delta(S)|.\] 
So, by the Courant-Fischer theorem,
\[\lambda_S \le \frac{(D_S^{\frac12}\vec1)^T\mathcal L_S(D_S^{\frac12}\vec1)}{\|D_S^{\frac12}\vec1\|_2^2} = \frac{|\delta(S)|}{\vol(S)}
\le \phi_2.\]
We assume without loss of generality that $S$ is a connected subgraph.
Then, by the Perron-Frobenius theorem, 
the eigenvector $v_S$ can be assumed to be positive,
and we can rescale $v_S$ such that $\|D_S^{\frac12}v_S\|_1 = 1$ is a probability distribution.
Let $p_{t, S}$ denote the restriction of $p_t$ on $S$.
We set the initial distribution $p_0$ such that $p_{0, S} = (D_S^{\frac12}v_S)^T$, and $p_{0,V-S}=0$.
We would show that $p_{t, S} \ge (1 - \frac{\lambda_S}{2})^t p_{0, S}$ by induction.
Clearly the statement is true when $t = 0$.
For $t > 0$, we have
\begin{eqnarray*}
p_{t, S} & \ge & p_{t - 1, S}W_S\\
& = & p_{t-1,S} \cdot \frac{(I_S+D_S^{-1}A_S)}{2}\\
& \ge & (1 - \frac{\lambda_S}{2})^{t - 1}v_S^TD_S^{\frac12} \frac{(I_S + D_S^{-1}A_S)}{2} \quad {\rm (by~induction)}\\
& = & (1 - \frac{\lambda_S}{2})^{t - 1}v_S^T(I - \frac{\mathcal L_S}{2})D_S^{\frac12}\\
& = & (1 - \frac{\lambda_S}{2})^tv_S^TD_S^{\frac12}\\
& = & (1 - \frac{\lambda_S}{2})^tp_{0, S}.
\end{eqnarray*}
Therefore,
\[p_t(S) = p_{t, S}(S) \ge (1 - \frac{\lambda_S}{2})^tp_{0, S}(S) \ge (1 - \frac{\phi_2}{2})^t.\]
Since random walk is linear and $v_S$ is a convex combination of $\chi_v$ where $v \in S$, there exists a vertex $v \in S$ such that if $p_0 = \chi_v$, then $p_t(S) \ge (1 - \frac{\phi_2}{2})^t$.
\end{proof}

\subsection{Proof of Theorem~\ref{thm:main}}

We combine the upper bound and the lower bound to prove Theorem~\ref{thm:main}.
We note that Theorem~\ref{thm:main} is trivial if $\phi_k(G) \geq \epsilon$, and
so we assume $\phi_k(G) < \epsilon$.
We also assume $\epsilon \leq 0.01$, as otherwise we reset $\epsilon=0.01$ and lose only a constant factor.

%\begin{theorem}
%\label{t:ba} % Bicriteria Approximation
%Given an undirected graph $G=(V,E)$, a target 
%There is an algorithm that inputs a graph $G = (V, E)$, the target \volume $k \le m$ and $\epsilon \in [1/k, 0.01]$, and outputs a set $S$ with $\vol(S) \le k^{1 + \epsilon}$ and $\phi(S) \le 8\sqrt{\phi_k(G)/\epsilon}$.
%The algorithm run in time $poly(n)$.
%\end{theorem}

%\begin{proof}[Proof of theorem \ref{t:ba}]
%Without loss assume $1/k^2 \le \phi_k(G) \le \epsilon$.
The algorithm is simple.
Set $T = \epsilon k^2\ln k/4$.
For each vertex $u$, set $p_0 = \chi_u$ and compute $S_{t, i}$ for all $t \le T$ and $i \in [n]$.
Denote these sets by $S_{t, i, u}$ to specify the starting vertex $u$.
Output a set $S = S_{t, i, u}$ that achieves the minimum in $\min_{\vol(S_{t, i, u}) \le k^{1 + \epsilon}} \phi(S_{t, i, u})$.
Clearly the algorithm runs in polynomial time.

We claim that $\phi(S) \le 4\sqrt{\phi_k(G)/\epsilon}$.
Suppose to the contrary that the algorithm does not return such a set.
Consider $t = \frac{\epsilon\ln k}{2\phi_k(G)}$;
note that $t \leq T$ as $\phi_k(G) \geq 1/k^2$ for a simple unweighted graph.
Applying Theorem~\ref{t:ub} with $l=k^{1+\epsilon}$, for any starting vertex $u$, we have
\begin{eqnarray*}
C_{t}(k) & \le & \frac{k}{k^{1 + \epsilon}} + \sqrt k(1 - 2\frac{\phi_k(G)}\epsilon)^{t}\\
& \le & k^{-\epsilon} + \sqrt k\exp(-2\frac{\phi_k(G)}\epsilon\frac{\epsilon\ln k}{2\phi_k(G)})\\
& = & k^{-\epsilon} + \sqrt k\exp(-\ln k)\\
& = & k^{-\epsilon} + k^{-\frac{1}{2}}.
\end{eqnarray*}
On the other hand, suppose $S^*$ is a set with $\vol(S^*) \le k$ and $\phi(S^*) = \phi_k(G)$.
Then Theorem \ref{t:lb} says that there exists a starting vertex $u^* \in S^*$ such that 
\begin{eqnarray*}
p_{t}(S^*) & \ge & (1 - \frac{\phi_k(G)}{2})^{t}\\
& \ge & \exp(-\phi_k(G)t) \quad ({\rm for~} \phi_k(G) < 0.01)\\
& = & \exp(-\frac12\epsilon\ln k)\\
& = & k^{-\frac{\epsilon}{2}}\\
& > & k^{-\epsilon} + k^{-\frac{1}{2}} \quad ({\rm for~} k \ge \frac1\epsilon {\rm~and~} \epsilon \leq 0.01)
\end{eqnarray*}
This is contradicting since $C_{t}(k) \ge p_{t}(S^*)$ for that starting vertex, 
completing the proof of Theorem~\ref{thm:main}(1).
%\end{proof}

%\begin{corollary}
%\label{c:ba2} % Bicriteria Approximation 2
%There is an algorithm that inputs a graph $G = (V, E)$, the target \volume $k \le m$ and error $\epsilon \in [2\ln k/k, 0.01]$, and outputs a set $S$ with $\vol(S) \le (1 + \epsilon)k$ and $\phi(S) \le 8\sqrt{2\phi_k(G)\ln k/\epsilon}$.
%The algorithm run in time $poly(n)$.
%\end{corollary}

Now we obtain Theorem~\ref{thm:main}(2) as a corollary of Theorem~\ref{thm:main}(1).
Set $\epsilon' = \frac{\epsilon}{2\ln k}$.
Then $k^{1+\epsilon'} \leq (1+\epsilon)k$.
By using Theorem~\ref{thm:main}(1) with $\epsilon'$, we have Theorem~\ref{thm:main}(2).

%\begin{proof}[Proof of corollary \ref{c:ba2}]
%Setting $\epsilon' = \frac\epsilon{2\ln k}$, we have $k^{1 + \epsilon'} \le (1 + \epsilon)k$.
%By theorem \ref{t:ba} with $\epsilon'$, we have the desired result.
%\end{proof}

\section{Local Graph Partitioning} \label{s:local}

To implement the algorithm locally, we use truncated random walk as in \cite{spielman-teng}.
Let $q_0 = \chi_v$.
For each $t \geq 0$, 
we define $\tilde p_t$ by setting $\tilde p_t(v) = 0$ if $q_t(v) < \epsilon d(v)$ and setting $\tilde p_t(v) = q_t(v)$ if $q_t(v) \geq \epsilon d(v)$,
and we define $q_{t+1} = \tilde p_{t}W$.
Then, we just use $\tilde p_t$ to replace $p_t$ in the algorithm in Section~\ref{s:main}.
To prove that the truncated random walk algorithm works,
we first show that $\tilde p_t$ is a good approximation of $p_t$ and can be computed locally.
Then we show that the curve defined by $\tilde p_t$ satisfies the upper bound in Theorem~\ref{t:ub}, and it almost satisfies the lower bound in Theorem~\ref{t:lb}.
Finally we combine the upper bound and the lower bound to prove Theorem~\ref{thm:local}.

\subsection{Computing Truncated Distributions}

\begin{lemma}
\label{t:apd} % Approximate Probability Distribution
%Let $G$ be a graph, $v$ be a vertex of $G$ and $p_0 = \chi_v$ be the initial distribution.
There is an algorithm that compute $\tilde p_{t}$ such that $\tilde p_{t} \le p_{t} \le \tilde p_{t}(v) + \epsilon td$ for every $0 \leq t \le T$, with time complexity $O(T/\epsilon)$, where $d$ is the degree vector.
\end{lemma}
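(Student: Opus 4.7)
The plan is to prove both inequalities by induction on $t$, treated as a joint invariant, and then bound the running time via a support-size argument.

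For the lower bound $\tilde p_t \le p_t$, the base case holds since $\tilde p_0 \le q_0 = \chi_v = p_0$. Assuming $\tilde p_t \le p_t$ componentwise, nonnegativity of the entries of $W$ gives $q_{t+1} = \tilde p_t W \le p_t W = p_{t+1}$, and the subsequent truncation only decreases entries, so $\tilde p_{t+1} \le q_{t+1} \le p_{t+1}$.

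For the upper bound $p_t \le \tilde p_t + \epsilon t d^T$ (interpreted componentwise), the key identity is $d^T W = d^T$, which follows immediately from $W = \frac12(I + D^{-1}A)$ together with $\vec 1^T A = d^T$. I would write $p_{t+1} - \tilde p_{t+1} = (p_t - \tilde p_t) W + (q_{t+1} - \tilde p_{t+1})$. By the inductive hypothesis and the identity the first summand is at most $(\epsilon t d^T) W = \epsilon t d^T$, while the second summand is bounded componentwise by $\epsilon d^T$ since the truncation at a vertex $u$ removes a value of at most $\epsilon d(u)$. Combining yields $p_{t+1} - \tilde p_{t+1} \le \epsilon(t+1) d^T$.

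For the time complexity, I would exploit a support bound: since $\|\tilde p_t\|_1 \le \|p_t\|_1 = 1$ and every nonzero entry satisfies $\tilde p_t(u) \ge \epsilon d(u)$, the support of $\tilde p_t$ has total degree at most $1/\epsilon$. Computing the next iterate $q_{t+1} = \tilde p_t W$ by pushing probability along the edges incident to the support costs $O(\vol(\operatorname{supp}(\tilde p_t))) = O(1/\epsilon)$ operations, so the total cost for $t \le T$ is $O(T/\epsilon)$. The main subtlety, rather than a serious obstacle, is to make sure the implementation is genuinely local: we only ever query vertices in the current support together with their neighbors, never touching the rest of the graph. This falls out of the push-based scheme because $\tilde p_t$ need not be evaluated outside its support, and the identity $d^T W = d^T$ is precisely what prevents the componentwise error from amplifying under a walk step.
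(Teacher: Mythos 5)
Your proof is correct and follows essentially the same route as the paper: both inequalities proved jointly by induction using the monotonicity of $W$ and the fact that truncation removes at most $\epsilon d(u)$ per vertex, together with a support-volume bound of $1/\epsilon$ for the running time. The only cosmetic difference is that you make explicit the identity $d^T W = d^T$ (equivalently, that the degree vector is a left fixed point of the lazy walk), which the paper uses silently in the step $(\tilde p_{t-1} + \epsilon(t-1)d)W = q_t + \epsilon(t-1)d$.
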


\begin{proof}%[Proof of theorem \ref{t:apd}]
%Let $\tilde p_0 = 0$ if $1 < \epsilon d_v$, otherwise $\tilde p_0 = p_0 = \chi_v$.
%We compute $\tilde p_{t'}$ iteratively as follows.
%Let $q_{t'} = \tilde p_{t' - 1}W$ and set $\tilde p_{t'}(v) = 0$ if $q_{t'}(v) < \epsilon d_v$, otherwise $\tilde p_{t'}(v) = q_{t'}(v)$.
%We claim that $\tilde p_{t'}$ satisfies the requirement $\tilde p_{t'} \le p_{t'} \le \tilde p_{t'} + \epsilon t'd$ for any $t' \le t$, and each iteration can be computed in $O(1/\epsilon)$ steps.
First we prove the approximation guarantee.
By induction, we have the upper bound
\[\tilde p_{t} \le q_{t} = \tilde p_{t - 1}W \le p_{t - 1}W = p_{t}.\]
Also, by induction, we have the lower bound
\[p_{t} = p_{t - 1}W \le (\tilde p_{t - 1} + \epsilon(t - 1)d)W = q_{t} + \epsilon(t - 1)d \le \tilde p_{t} + \epsilon td.\]
Next we bound the computation time.
Let $S_{t}$ be the support of $\tilde p_{t}$.
In order to compute $q_{t + 1}$ from $\tilde p_{t}$, 
we need to update each vertex $v \in S_{t}$ and its neighbors.
Using a perfect hash function, the neighbors of a vertex $v$ can be updated in $O(d(v))$ steps, and thus $q_{t + 1}$ and $\tilde p_{t+1}$ can be computed in $O(\vol(S_{t}))$ steps.
Since each vertex $v \in S_{t}$ satisfies $\tilde p_{t} \ge \epsilon d(v)$, we have $\vol(S_{t}) = \sum_{v \in S_{t}} d(v) \le p_{t}(S_{t})/\epsilon \le 1/\epsilon$, and this completes the proof.
\end{proof}

\subsection{Approximate Upper Bound}

We use the truncated probability distributions to define the curve $\tilde C_t$.
Note that $\tilde p_t$ may not be a probability distribution and $\tilde C_{t}(2m)$ may be less than one.
And we define the level sets $\tilde S_{t, i} = \{v_1, v_2, \dots, v_i\}$ when we order the vertices such that $\tilde p_{t}(v_1)/d(v_1) \ge \tilde p_{t}(v_2)/d(v_2) \ge \dots \ge \tilde p_{t}(v_n)/d(v_n)$.
We show that $\tilde C_t$ would satisfy the same upper bound as in Theorem~\ref{t:ub}.

\begin{lemma}
\label{l:aub} % Approximate Upper Bound
%Let $\tilde S_{t', i} = \{v_1, v_2, \dots, v_i\}$ when we order the vertices such that $\tilde p_{t'}(v_1)/d_{v_1} \ge \tilde p_{t'}(v_2)/d_{v_2} \ge \dots \ge \tilde p_{t'}(v_n)/d_{v_n}$.
Suppose for all $t \le T$ and $i \in [n]$, we have $\phi(\tilde S_{t, i}) \ge \phi_1$ whenever $\vol(\tilde S_{t, i}) \le l \le m$.
Then $\tilde C_{t}(x) \le f_t(x) := \frac xl + \sqrt x(1 - \frac{\phi_1^2}8)^t$ for all $x \le l$.
\end{lemma}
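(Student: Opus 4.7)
The proof is a direct adaptation of Theorem~\ref{t:ub}, with one extra ingredient: the truncation step must be absorbed into the Lov\'asz--Simonovits contraction. The crucial observation is that truncation can only decrease values pointwise, so $\tilde p_t \le q_t = \tilde p_{t-1} W$ coordinate-wise. Consequently, at any extreme point $x = x_j$ of $\tilde p_t$ with level set $S = \tilde S_{t,j}$ (so $\vol(S) = x$),
\[
\tilde C_t(x) \;=\; \sum_{v \in S} \tilde p_t(v) \;\le\; \sum_{v \in S} q_t(v) \;=\; \sum_{v \in V} h(v)\, \tilde p_{t-1}(v),
\]
where $h(v) := \tfrac{1}{2}\chi_S(v) + \tfrac{|N(v)\cap S|}{2d(v)}$ is the smoothed indicator of $S$ under one lazy walk step, satisfying $0 \le h \le 1$ and $\sum_v h(v) d(v) = \vol(S) = x$.

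Next I establish the analogue of the one-step contraction for $\tilde C_t$. The proof of Lemma~\ref{l:bc} is linear in the underlying distribution and depends only on the geometry of $h$ (how much mass of $h$ can be pushed toward $1$ or $0$ before the boundary of $S$ forces an exchange, controlled by $\phi(S)$); it therefore applies verbatim to the nonnegative vector $\tilde p_{t-1}$ in place of a probability distribution. Since by hypothesis $\phi(S) \ge \phi_1$ whenever $\vol(S) \le l$, the Lov\'asz--Simonovits decomposition of $h$ into a convex combination of two indicator-like functions with support volumes $x(1\pm\phi_1)$ yields
\[
\sum_v h(v)\,\tilde p_{t-1}(v) \;\le\; \tfrac{1}{2}\bigl(\tilde C_{t-1}(x - \phi_1 x) + \tilde C_{t-1}(x + \phi_1 x)\bigr).
\]
Combining the two displays gives $\tilde C_t(x) \le \tfrac{1}{2}(\tilde C_{t-1}(x(1-\phi_1)) + \tilde C_{t-1}(x(1+\phi_1)))$ at every extreme point $x \le l$ of $\tilde p_t$, i.e.\ exactly the inequality driving Theorem~\ref{t:ub}.

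From here the inductive argument of Theorem~\ref{t:ub} is reused with no change. The base case $t=0$ is immediate since $\tilde C_0 \le 1 \le f_0(x)$ for $x \ge 1$; the endpoint $x = l$ is handled by $f_t(l) \ge 1 \ge \tilde C_t(l)$; the inductive step at extreme points reduces to the same Taylor estimate $\tfrac{1}{2}(\sqrt{1-\phi_1}+\sqrt{1+\phi_1}) \le 1 - \phi_1^2/8$ that appears there; and all other points are handled by concavity of $\tilde C_t$ (which is built as a sorted-sum curve, hence concave) together with concavity of $f_t$. The only genuinely new point to verify is that the Lov\'asz--Simonovits decomposition still works when the underlying vector $\tilde p_{t-1}$ has total mass strictly less than one; this is a straightforward inspection of the proof of Lemma~\ref{l:bc}, which never uses $\|\tilde p_{t-1}\|_1 = 1$. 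Once this is granted the rest is mechanical, and the conclusion $\tilde C_t(x) \le f_t(x)$ for all $x \le l$ follows.
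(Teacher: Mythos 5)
Your proof is correct and follows the same route as the paper: establish the one-step Lov\'asz--Simonovits contraction for $\tilde C_t$ by combining the pointwise inequality $\tilde p_t \le q_t = \tilde p_{t-1}W$ with Lemma~\ref{l:bc} applied to $q_t$ and $\tilde p_{t-1}$, then rerun the induction of Theorem~\ref{t:ub}. You additionally make explicit (and correctly verify) a subtlety the paper leaves implicit, namely that the Lov\'asz--Simonovits decomposition never uses $\|\tilde p_{t-1}\|_1 = 1$, so the argument survives the mass lost by truncation.
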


\begin{proof}%[Proof of lemma \ref{l:aub}]
Let $\tilde x_i = \sum_{v \in \tilde S_{t, i}} d(v)$ be the extreme points defined by $\tilde p_{t'}$.
By the same proof as in Theorem~\ref{t:ub}.
it suffices to prove that Lemma~\ref{l:bc} still holds after replacing $p_{t}$ by $\tilde p_{t}$.
It means that we need to show if $x = \tilde x_i \le m$ is an extreme point (at time $t$), $S = \tilde S_{t, j}$ is the corresponding set of vertices and $\vol(S) \ge \phi$, then $\tilde C_t(x) \le \frac12(\tilde C_{t - 1}(x - \phi x) + \tilde C_{t - 1}(x + \phi x))$.
This is true since the curve defined by $q_t = \tilde p_{t - 1}W$ is less than $\frac12(\tilde C_{t - 1}(x - \phi x) + \tilde C_{t - 1}(x + \phi x))$ by Lemma~\ref{l:bc}, and $\tilde p_t \le q_t$.
\end{proof}

\subsection{Proof of Theorem~\ref{thm:local}}

%\begin{theorem}
%\label{t:lp} % Local Partitioning
%Let $G$ be a graph, $S \subseteq V$ be a set with $\vol(S) \le k$ and $\phi(S) \le \varphi$, where $\frac1\epsilon \le k \le m$.
%There is an algorithm that inputs $G$, $k$ and a random vertex in $S$, and output a set $T$ such that $\vol(T) \le 5k^{1 + \epsilon}$ and $\phi(T) \le 8\sqrt{\varphi/\epsilon}$.
%The algorithm succeeds with probability at least $1/4$ and takes $O(\epsilon^2k^{1 + 2\epsilon}\ln^3k/\varphi^2)$ steps.
%\end{theorem}

Suppose $U$ is a subset of vertices with $\vol(U) \leq k$ and $\phi(U) \leq \varphi$, where $\frac{1}{\epsilon} \leq k \leq m$.
We would prove that given $k$ and $\varphi$ and an initial vertex $u$ in $U$ with $p_t(U) \geq \frac{1}{c}(1-\frac{\phi}{2})^t$ for a constant $c>1$,
the truncated random walk algorithm will output a set $S$ with $\vol(S) \leq O(k^{1+\epsilon})$ and $\phi(S) \leq 8\sqrt{\varphi/\epsilon}$.
The running time of the algorithm is $O(\epsilon^2k^{1 + 2\epsilon}\ln^3k/\varphi^2)$.

%\begin{proof}[Proof of theorem \ref{t:lp}]
For concreteness we set $c=4$ in the following calculations.
Set $T = \frac{\epsilon\ln k}{2\varphi}$ and $\epsilon' = \frac{k^{-1 - \epsilon}}{20T}$.
Applying Lemma \ref{t:apd} with $T$ and $\epsilon'$,
we can compute all $\tilde p_{t}$ and thus $\tilde S_{t,i}$ for all $t \leq T$ and $i \in [5k^{1+\epsilon}]$ in $O(T\ln k/\epsilon') = O(\epsilon^2k^{1 + \epsilon}\ln^3 k/\varphi^2)$ steps (with an additional $\ln k$ factor for sorting).
By Lemma~\ref{t:apd}, the starting vertex $u$ will give $\tilde p_T(U) \geq \frac{1}{4} (1-\frac{\varphi}{2})^T - \epsilon' T \vol(U)$.
We claim that one of the set $S = S_{t, i}$ must satisfy $\vol(S) \le 5k^{1 + \epsilon}$ and $\phi(S) \le 8\sqrt{\varphi/\epsilon}$.
Otherwise, setting $\phi_1 \geq 8\sqrt{\varphi/\epsilon}$, we have
\begin{eqnarray*}
\tilde p_T(U) & \geq & \frac{1}{4} (1 - \frac{\varphi}{2})^T - \epsilon' T \vol(U)\\
& \geq & \frac{1}{4} \exp(-\varphi T) - \frac{k^{-\epsilon}}{20} \quad ({\rm for~} \phi < 0.01)\\
& = & \frac{k^{-\frac{\epsilon}{2}}}{4}  - \frac{k^{-\epsilon}}{20}\\
& > & \frac{k^{-\epsilon}}{5} + k^{-\frac12} \quad ({\rm using~} k^{-\frac{\epsilon}{2}} > k^{-\epsilon} + 4k^{-\frac12} {\rm~for~} k \geq \frac{1}{\epsilon} {\rm~and~} \epsilon \leq 0.01)\\
& \geq & \frac{k}{5k^{1+\epsilon}} + \sqrt{k}(1-\frac{\phi_1^2}{8})^T\\
& \geq & \tilde C_T(k),
\end{eqnarray*}
which is a contradiction, completing the proof of Theorem~\ref{thm:local}.

%If the starting random vertex satisfies $\chi_v^TW^t\chi_S \ge (1 - \varphi)^t/4$, which happens with probability at least $1/4$, then the algorithm will return a set successfully.

%By the theorem in $O(\epsilon^2k^{1 + 2\epsilon}\ln^2k/\varphi^2)$ steps we can compute all $\tilde p_{t'}$ and thus $\tilde S_{t', i}$ for all $t' \le t$ and $i \in [5k^{1 + \epsilon}]$ in an additional $\ln k$ term for sorting.

%The reason is otherwise by lemma \ref{l:aub} $\tilde C_t(k) \le k^{-\epsilon}/5 + k^{-1} < k^{-\frac12\epsilon}/4 - k^{-\epsilon}/20 \le \tilde p_t(S)$, contradicting.
%\end{proof}

\section{Concluding Remarks}

We presented a bicriteria approximation algorithm for the small sparsest cut problem with conductance guarantee independent of $n$, but the volume of the output set is $k^{1+\epsilon}$.
We note that if one can also guarantee that the volume of the output set is at most $Mk$ for an absolute constant $M$,
then one can disprove the small set expansion conjecture, which states that for any constant $\epsilon$ there exists a constant $\delta$ such that distinguishing $\phi_{\delta m}(G) < \epsilon$ and $\phi_{\delta m}(G) > 1-\epsilon$ is NP-hard.
This can be viewed as an evidence that our analysis is almost tight, or an evidence that the small set expansion problem is not NP-hard.

More formally, suppose there is a polynomial time algorithm with the following guarantee:
given $G$ with $\phi_{k}(G)$, always output a set $S$ with $\phi(S) = f(\phi_k(G))$ and $\vol(S) = Mk$ where $f(x)$ is a function that tends to zero when $x$ tends to zero (e.g. $f(x) = x^{1/100}$) and $M$ is an absolute constant.
Then we claim that there is a (small) constant $\epsilon$ such that whenever $\phi_{k}(G) < \epsilon$ there is a polynomial time algorithm to return a set $S$ with $\phi(S) < 1-\epsilon$ and $\vol(S) \leq k$.

We assume that $G$ is a $d$-regular graph, as in~\cite{raghavendra-steurer} where the small set expansion conjecture was formulated.
Suppose there is a subset $U$ with $|U|=k$ and $\phi(U) < \epsilon$.
First we use the algorithm to obtain a set $S$ with $\phi(S) \leq f(\epsilon)$ and assume $|S| = Mk$ (instead of $|S| \leq Mk$).
Next we show that a random subset $S' \subseteq S$ of size exactly $k$ will have $\phi(S') < 1-\epsilon$ with a constant probability for a small enough $\epsilon$.
Let $E(S)$ be the set of edges with both endpoints in $S$.
Each edge in $E(S)$ has probability $2(\frac{1}{M})(1-\frac{1}{M})$ to be in $\delta(S')$.
So, the expected value of 
\[|\delta(S')| \leq |\delta(S)| + 2(\frac{1}{M})(1-\frac{1}{M})|E(S)|.\]
By construction $\vol(S') = kd$, and so the expected value of
\[\phi(S') \leq \frac{|\delta(S)|}{kd} + \frac{2(\frac{1}{M})(1-\frac{1}{M})|E(S)|}{kd}.\]
Note that $|E(S)| \leq Mkd/2$ and $|\delta(S)|/kd = M\phi(S) \leq Mf(\epsilon)$, so the expected value of 
\[\phi(S') \leq Mf(\epsilon) + 1-\frac{1}{M}.\]
For a small enough $\epsilon$ depending only on $M$, the expected value of $\phi(S') \leq 1-10\epsilon$.
Therefore, with a constant probability, we have $\phi(S') < 1 - \epsilon$.
This argument can be derandomized using standard techniques.

We show that random walk can be used to obtain nontrivial bicriteria approximation algorithms for the small sparsest cut problem. 
We do not know of an example showing that our analysis is tight.
It would be interesting to find examples showing the limitations of random walk algorithms (e.g. showing that they fail to disprove the small set expansion conjecture).

\bibliographystyle{plain}

\begin{thebibliography}{20}

\bibitem{alon-milman}
N. Alon, V. Milman.
{\em Isoperimetric inequalities for graphs, and superconcentrators}.
Journal of Combinatorial Theory, Series B, 38(1), 73--88, 1985.

\bibitem{anderson-chung-lang}
R. Anderson, F.R.K. Chung, K.J. Lang.
{\em Local graph partitioning using PageRank vectors}.
In Proceedings of the 47th Annual IEEE Symposium on Foundations of Computer Science (FOCS), 475--486, 2006.

\bibitem{anderson-peres}
R. Anderson, Y. Peres.
{\em Finding sparse cuts locally using evolving sets}.
In Proceedings of the 41st Annual ACM Symposium on Theory of Computing (STOC), 235--244, 2009.

\bibitem{arora-rao-vazirani}
S. Arora, S. Rao, U. Vazirani.
{\em Expander flows, geometric embeddings and graph partitioning}.
In Proceedings of the 36th Annual ACM Symposium on Theory of Computing (STOC), 222--231, 2004.

\bibitem{bansal+}
N. Bansal, U. Feige, R. Krauthgamer, K. Makarychev, V. Nagarajan, J. Naor, R. Schwartz.
{\em Min-max graph partitioning and small set expansion}.
In Proceedings of the 52nd Annual IEEE Symposium on Foundations of Computer Science, 17--26, 2011.

\bibitem{cheeger}
J. Cheeger.
{\em A lower bound for the smallest eigenvalue of the Laplacian}.
Problems in Analysis, Princeton University Press, 195--199, 1970.

\bibitem{chung}
F.R.K. Chung.
{\em A local graph partitioning algorithm using heat kernel pagerank}.
In Proceedings of the 6th International Workshop on Algorithms and Models for the Web-Graph (WAW), 62--75, 2009.

\bibitem{hoory-linial-wigderson}
S. Horry, N. Linial, A. Wigderson.
{\em Expander graphs and their applications}.
Bulletin of the American Mathematical Society 43(4), 439--561, 2006.

\bibitem{leighton-rao}
F.T. Leighton, S. Rao.
{\em Multicommodity max-flow min-cut theorem and their use in designing approximation algorithms}.
Journal of the ACM 46(6), 787--832, 1999.

\bibitem{lovasz-kannan}
L. Lov\'asz, R. Kannan.
{\em Faster mixing via average conductance}.
In Proceedings of the 31st Annual ACM Symposium on Theory of Computing (STOC), 282--287, 1999.

\bibitem{lovasz-simonovits}
L. Lov\'asz, M. Simonovits.
{\em The mixing time of Markov chains, an isoperimetric inequality, and computing the volume}.
In Proceedings of the 31st Annual IEEE Symposium on Foundations of Computer Science (FOCS), 346--354, 1990.

\bibitem{raghavendra-steurer}
P. Raghavendra, D. Steurer.
{\em Graph expansion and the unique games conjecture}.
In Proceedings of the 42nd Annual ACM Symposium on Theory of Computing (STOC), 755--764, 2010.

\bibitem{raghavendra-steurer-tetali}
P. Raghavendra, D. Steurer, P. Tetali.
{\em Approximations for the isoperimetric and spectral profile of graphs and related parameters}.
In Proceedings of the 42nd Annual ACM Symposium on Theory of Computing (STOC), 631--640, 2010.

\bibitem{spielman-teng}
D.A. Spielman, S.-H. Teng.
{\em A local clustering algorithm for massive graphs and its applications to nearly-linear time graph partitioning}.
CoRR, abs/0809.3232, 2008.

\bibitem{gharan-trevisan}
S. Oveis Gharan, L. Trevisan.
{\em Approximating the expansion profile and almost optimal local graph clustering}.
CoRR, abs/1204.2021, 2012.

\end{thebibliography}

\end{document}